\DeclareMathOperator{\diam}{diam}
\DeclareMathOperator{\comporder}{comporder}
\journal{Discrete Applied Mathematics}
\begin{document}

\newtheorem{theorem}[equation]{Theorem}
\newtheorem{proposition}[equation]{Proposition}
\newtheorem{lemma}[equation]{Lemma}
\newtheorem{corollary}[equation]{Corollary}
\theoremstyle{definition}
\newtheorem{definition}[equation]{Definition}
\theoremstyle{remark}
\newtheorem{remark}[equation]{Remark}
\newtheorem{example}[equation]{Example}

\begin{frontmatter}



\title{A Generalization of Distance Domination} 


\author{Alicia Muth} 
\author{E. Dov Neimand} 

\affiliation{organization={Departments of Mathematics and Computer Engineering, Stevens Institute of Technology},
            addressline={1 Castle Point Terrace}, 
            city={Hoboken},
            state={NJ},
            country={United States}}

\begin{abstract}
  Expanding on the graph theoretic ideas of $k$-component order connectivity and distance-$\ell$ domination, we present a quadratic-complexity algorithm that finds a tree’s minimum failure-set cardinality i.e. the minimum cardinality any subset of the tree's vertices must have so that all clusters of vertices further away than some $\ell$ do not exceed a cardinality threshold. Applications of solutions to the expanded problems include choosing service center locations so that no large neighborhoods are excluded from service, while reducing the redundancy inherent in distance domination problems.
\end{abstract}


\begin{highlights}
\item We generalize the concept of distance domination in graphs.
\item We present an algorithm that, under the generalization, finds a minimal failure set for a tree.
\end{highlights}

\begin{keyword}
domination, neighbor connectivity, distance-domination, tree graphs
\end{keyword}

\end{frontmatter}



\section{Introduction}\label{sec1}
For standard graph-theoretic notation and terminology, we refer to the book ``Graphs and Digraphs" authored by Chartrand, Lesniak, and Zhang \cite{chartrand_lesniak_zhang_2011}.
%

Given an order $n$ graph $G=(V, E)$, Haynes et al. in \cite{haynes_hedetniemi_slater_1998} define a set $S \subseteq V$ as a \textbf{distance-$\ell$ dominating set} if every vertex of $V$ is within distance $\ell$ of at least one vertex in $S$.

A \textbf{$k$-component order connectivity} set of a graph $G$ is the set $S \subseteq V$ such that all components of the graph induced by $V \setminus S$ have order less than $k$.

A \textbf{$k$-component order neighbor connectivity} set of a graph $G$ is the set $S \subseteq V$ such that all components of the graph induced by $V \setminus N_1[S]$ have order less than $k$.

The \textbf{closed-$\ell$ neighborhood} of a vertex $v \in V$, denoted $N_{\ell} [v]$, is the set $$N_{\ell} [v] = \{ u \in V \mid  d(u,v) \le \ell\}.$$

A set $F$ is a \textbf{distance-$\ell$ dominating set} if $N_{\ell}[F]=V$.
The \textbf{distance-$\ell$ domination number}, $\gamma_{\le \ell}(G)$, is the minimum cardinality of a distance-$\ell$ dominating set in $G$.

Applications of distance domination include the optimization of the placement of resources.
For example, placement of bus stops to minimize the number of blocks any individual in a given area may have to walk; or the placement of radio stations, which individually can cover a particular radius, to minimize the number of stations while still covering an entire area \cite{ameenal_bibi_lakshmi_jothilakshmi_2017}.
Additionally, there are opportunities for this parameter to take on even more properties.
For example, it is possible to use weighted vertices to represent the population of a block; or digraphs to represent one-way streets.

As distance domination problems, these placements may lead to significant overlap due to the requirement that all vertices be within $\ell$ distance of a distance dominating set. Our extension to $k$-component order neighbor connectivity allows for particular subsets of a region to be omitted from coverage, provided they are `small enough.'
These results have cost-saving benefits as they allow for minimum installations while still providing optimum service for all but some smaller clusters.
This parameter could be used to minimize the number of installations, and therefore the budget while keeping in mind the maximum distance needed to travel.

For relevant works preceding ours, in \cite{luttrell_2013}, Luttrell's dissertation fully introduces $k$-component order neighbor connectivity. Doucette et al. expand on Luttrell's work in \cite{doucette_muth_suffel_2018}.
Gross et al. \cite{gross2013survey} give us the $k$-component order neighbor connectivity definition we use here. They denote the cardinality of the smallest such $S$ with $\kappa_{nc}^{(k)}(G)$.
More information regarding the various domination-inspired parameters can be found in Luttrell et al \cite{luttrell_iswara_kazmierczak_suffel_gross_saccoman}.

\section{Initial Results}

First introduced is the definition of the vulnerability parameter, $k$, that is the focus of the paper.

Below, except where stated otherwise, we use $G = (V, E)$ to refer to a simple order-$n$ graph.

\begin{definition} 
A set $S$ is a \textbf{distance-$\ell$, $k$-component order, neighbor connective failure set} if the subgraph of $G$ induced by $V \setminus N_{\ell}[S]$ leaves all components with order less than $k$.
\end{definition}

\begin{definition} 
The \textbf{distance-$\ell$, $k$-component order connectivity number} of a graph, $\lambda_{\ell}^{(k)}(G)$, is the minimum cardinality of a distance-$\ell$, $k$-component failure set for $G$. We require the component threshold, $k$, always has $1 \le k \le n$, and the distance $\ell \geq 0$. Below, we use $k$ as a component threshold and $l$ as a distance.
\end{definition}

\begin{wrapfigure}{R}{.505\textwidth} 
\caption{\\A Minimum Distance-1, 3-Component Failure Set}
\label{fig:failedGraph}
    \centering
    \includegraphics[width=0.3\textwidth]{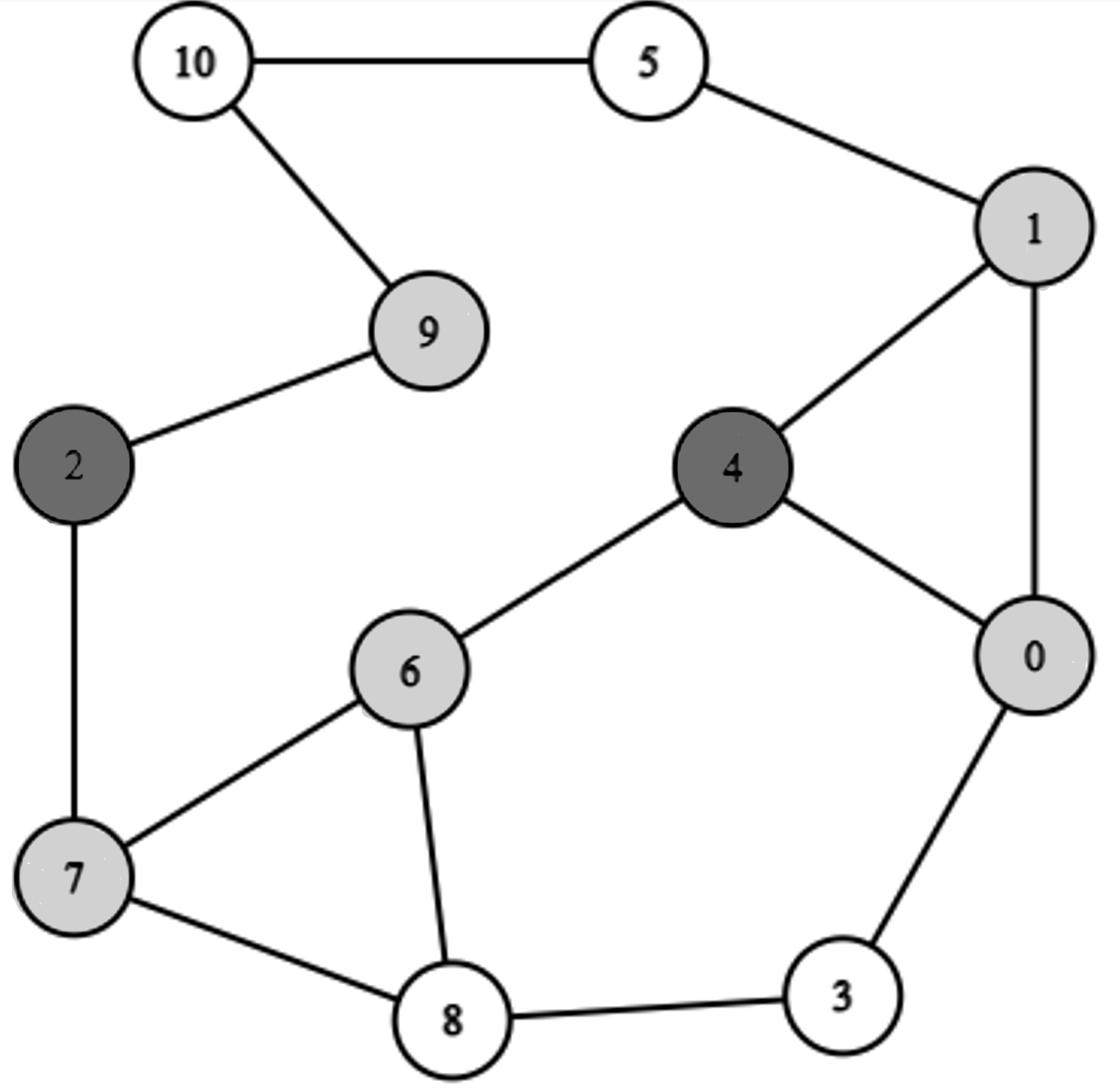}
\end{wrapfigure}

Where $k = 1$ then the problem becomes one of distance domination.  When $\ell = 0$ it becomes a problem of $k$-component order connectivity.

For distance-$\ell$, $k$-component order neighbor connectivity, a vertex $v \in V$ is said to \textbf{fail} itself and each of its $\ell$-adjacent neighbors. That is, $v$ fails all $w \in N_{\ell}[v]$.

For $F \subseteq V$, a component $H$ of $V \setminus N_{\ell}[F]$ is \textbf{failed} if $H$ has order less than $k$.

A graph $G$ is \textbf{failed} by a set $F$ if each component of the subgraph of $G$ induced by $V \setminus N_{\ell}[F]$ has order less than $k$.

In Figure \ref{fig:failedGraph} we offer an example of a minimum failure set with $k=3$ and $\ell = 1$.  The darker nodes are those in the failure set, the lighter nodes are within a distance of 1 from nodes in the failure set, and the white nodes make up components with sizes less than 3.

This parameter was first introduced in the dissertation of A. Muth using the notation $\kappa_{nc, \le \ell}^{(k)}(G)$ where the subscript $nc$ stands for neighbor connectivity\cite{muth}. 

For simplicity, we refer to a distance-$\ell$, $k$-component order neighbor connectivity failure set as a \textbf{failure set} and if there is no failure set with smaller cardinality, a \textbf{minimum failure set}. 

We consider the relationship between the cardinality of a minimum failure set, $F$, and the values of $k$ and $\ell$. As the order of surviving components increases, the cardinality of the minimum failure set decreases.

\begin{proposition} 
\label{prop:com thresh inequality}
Let $j$ and $k$ be component thresholds with $j \le k$. Then
$$\lambda_{\ell}^{(k)}(G) \le \lambda_{\ell}^{(j)}(G).$$
\end{proposition}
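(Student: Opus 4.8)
The plan is to show that any failure set for the $k$-component problem is automatically a failure set for the $j$-component problem, which immediately gives the inequality on minimum cardinalities. The key observation is purely about the threshold: the distance parameter $\ell$ plays no role in the argument, so we fix $\ell$ throughout and only compare the two thresholds $j \le k$.

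First I would let $F$ be a minimum failure set for $G$ under threshold $k$, so that $|F| = \lambda_{\ell}^{(k)}(G)$ and every component of the subgraph induced by $V \setminus N_{\ell}[F]$ has order less than $k$. Next I would observe that the graph $V \setminus N_{\ell}[F]$ does not depend on the threshold at all; only the \emph{failed} condition does. Then, for an arbitrary component $H$ of that induced subgraph, we know $|V(H)| < k$. Since the components with order less than $j$ form a subset of the components with order less than $k$, we do \emph{not} get containment in the direction we want that way; instead the point is the reverse: wait --- I need $|V(H)| < j$, but I only know $|V(H)| < k$ and $j \le k$, so this does not follow. Let me re-examine: the inequality to prove is $\lambda_{\ell}^{(k)}(G) \le \lambda_{\ell}^{(j)}(G)$, so I should instead start from a minimum failure set for threshold $j$.

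So the corrected plan: let $F$ be a minimum failure set for $G$ under the \emph{smaller} threshold $j$, so $|F| = \lambda_{\ell}^{(j)}(G)$ and every component $H$ of the subgraph induced by $V \setminus N_{\ell}[F]$ satisfies $|V(H)| < j$. Since $j \le k$, we get $|V(H)| < j \le k$, so every such component also has order less than $k$; hence $F$ is a failure set for $G$ under threshold $k$. Therefore $\lambda_{\ell}^{(k)}(G)$, being the minimum cardinality over all threshold-$k$ failure sets, is at most $|F| = \lambda_{\ell}^{(j)}(G)$, which is exactly the claim.

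There is essentially no obstacle here --- the result is a one-line monotonicity argument, and the only thing to be careful about is the direction of the inequality (a smaller surviving-component threshold is a \emph{stronger} requirement, hence needs a set at least as large, so its minimum failure number is the larger one). I would state it cleanly by noting that the family of failure sets for threshold $j$ is contained in the family of failure sets for threshold $k$, and the minimum of a cardinality function over a larger family is no larger; no case analysis, induction, or appeal to the tree structure is needed, and the same proof works for every graph $G$, not just trees.
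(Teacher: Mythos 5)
Your final argument is correct and is exactly the monotonicity observation the paper relies on (it states this proposition without proof as immediate from the definitions): any failure set for the smaller threshold $j$ leaves only components of order less than $j \le k$, hence is also a failure set for threshold $k$, giving $\lambda_{\ell}^{(k)}(G) \le \lambda_{\ell}^{(j)}(G)$. Just delete the false start where you began from a threshold-$k$ set and reversed course mid-proof; the cleaned-up version is the standard one-line argument and needs nothing more.
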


Similarly, as the distance increases, minimum failure-set cardinality may decrease.

\begin{proposition} 
\label{prop:ells}
Let $m$ and $\ell$ be distances with  $m \le \ell$.
Then, $$\lambda_{\ell}^{(k)}(G) \le \lambda_m^{(k)}(G).$$
\end{proposition}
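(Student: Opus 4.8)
The plan is to show that any distance-$m$, $k$-component failure set is automatically a distance-$\ell$, $k$-component failure set whenever $m \le \ell$, since increasing the distance can only enlarge the set of failed vertices. Concretely, let $F$ be a minimum distance-$m$, $k$-component failure set, so $|F| = \lambda_m^{(k)}(G)$. The key observation is the monotonicity of closed neighborhoods in the distance parameter: for every vertex $v$, $m \le \ell$ implies $N_m[v] \subseteq N_\ell[v]$, because $d(u,v) \le m$ forces $d(u,v) \le \ell$. Taking unions over $v \in F$ gives $N_m[F] \subseteq N_\ell[F]$, and hence $V \setminus N_\ell[F] \subseteq V \setminus N_m[F]$.

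Next I would argue that shrinking the surviving vertex set cannot increase any component order. Every component $H'$ of the subgraph induced by $V \setminus N_\ell[F]$ is contained in some component $H$ of the subgraph induced by $V \setminus N_m[F]$: indeed $H'$ is a connected subgraph of the larger induced subgraph, so it lies inside one of that subgraph's connected components. Therefore $|V(H')| \le |V(H)| < k$, where the strict inequality holds because $F$ fails $G$ at distance $m$. Since this bound holds for every component $H'$, the set $F$ fails $G$ at distance $\ell$ as well.

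Finally, $F$ being a distance-$\ell$, $k$-component failure set gives $\lambda_\ell^{(k)}(G) \le |F| = \lambda_m^{(k)}(G)$, which is the claimed inequality.

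I do not anticipate a genuine obstacle here; the proof is a short monotonicity argument paralleling Proposition~\ref{prop:com thresh inequality}. The only point requiring a little care is the component-containment step: one must note that a connected subgraph of an induced subgraph $G[W]$ with $W \subseteq W'$ sits inside a single component of $G[W']$, so that the order bound transfers. Everything else is immediate from the definitions of $N_\ell[\,\cdot\,]$ and of a failure set.
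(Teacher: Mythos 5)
Your argument is correct and is exactly the monotonicity reasoning the paper intends: the paper states this proposition without a written proof, treating it as immediate from $N_m[F]\subseteq N_\ell[F]$, which is precisely the observation you spell out (including the component-containment step). Nothing further is needed.
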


Notice that the next two propositions follow directly from the definition of the various parameters and the preceding results.

\begin{proposition} 
\label{prop:domination}
The distance domination number, $\gamma_{\le \ell}(G)$, of a graph is $\lambda_\ell^{(1)}(G)$.  Propositions \ref{prop:com thresh inequality} and \ref{prop:ells} combine for the following:

$$\lambda_{\ell}^{(k)}(G) \le \lambda_{\ell}^{(1)}(G) = \gamma_{\le \ell}(G)$$.
\end{proposition}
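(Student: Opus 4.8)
The plan is to show that Proposition~\ref{prop:domination} is essentially a bookkeeping exercise that ties together the definitions already in place with the two monotonicity results. First I would establish the identity $\lambda_\ell^{(1)}(G) = \gamma_{\le \ell}(G)$. For this, observe that a set $F$ is a distance-$\ell$, $1$-component order neighbor connectivity failure set precisely when every component of the subgraph of $G$ induced by $V \setminus N_\ell[F]$ has order less than $1$, i.e.\ order $0$, i.e.\ there are no components at all. That happens if and only if $V \setminus N_\ell[F] = \emptyset$, equivalently $N_\ell[F] = V$, which by the definition recalled in the introduction is exactly the condition that $F$ is a distance-$\ell$ dominating set. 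Since the two families of sets coincide, their minimum cardinalities coincide, giving $\lambda_\ell^{(1)}(G) = \gamma_{\le \ell}(G)$.

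Next I would invoke Proposition~\ref{prop:com thresh inequality} with the choice $j = 1$ and the given $k$; since $1 \le k$ is guaranteed by the standing requirement $1 \le k \le n$ on the component threshold, we get $\lambda_\ell^{(k)}(G) \le \lambda_\ell^{(1)}(G)$. Chaining this with the identity from the previous step yields
\[
\lambda_\ell^{(k)}(G) \le \lambda_\ell^{(1)}(G) = \gamma_{\le \ell}(G),
\]
which is the displayed inequality. (Proposition~\ref{prop:ells} is not strictly needed for the stated chain, though it is the companion monotonicity fact the statement alludes to; if one wanted the analogous statement comparing across distances as well, one would combine both propositions, but here only the threshold monotonicity is invoked.)

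There is essentially no hard part: the only thing to be careful about is the boundary interpretation of ``order less than $k$'' when $k = 1$ — namely that a component of order less than $1$ is vacuous, so the failure condition degenerates to the empty-leftover condition rather than to some nontrivial size bound. Once that degenerate case is read correctly, the equivalence with distance domination is immediate and the inequality follows from Proposition~\ref{prop:com thresh inequality} with no further work. I would write the proof in two or three sentences, explicitly flagging the $k=1$ degeneration so the reader is not left to rederive it.
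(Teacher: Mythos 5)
Your proposal is correct and matches the argument the paper intends: the paper states this proposition as following directly from the definitions and the preceding results, and your fleshed-out version (the $k=1$ degeneration forcing $N_\ell[F]=V$, hence $\lambda_\ell^{(1)}(G)=\gamma_{\le\ell}(G)$, then Proposition~\ref{prop:com thresh inequality} with $j=1$) is exactly that argument made explicit. Your side remark that Proposition~\ref{prop:ells} is not actually needed for the displayed chain is also accurate.
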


\begin{proposition}
\label{prop:kcomp}
For distance $\ell = 1$, a set $F$ is a distance-$1$, $k$-component order neighbor connectivity failure set, if and only if it is a $k$-component order neighbor connectivity failure set, i.e.,
$$ \lambda_{\ell}^{(k)}(G) \le \lambda_{1}^{(k)}(G) = \kappa_{nc}^{(k)}(G).$$
\end{proposition}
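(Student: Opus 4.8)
The plan is to recognize that this statement reduces to unwinding the relevant definitions and then invoking Proposition~\ref{prop:ells}. First I would observe that for the distance $\ell = 1$ the closed-$\ell$ neighborhood $N_\ell[v]$ is, by definition, exactly the closed-$1$ neighborhood $N_1[v]$, and hence $N_1[F] = \bigcup_{v \in F} N_1[v]$ is precisely the neighborhood appearing in the definition of a $k$-component order neighbor connectivity set.

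Next I would set the two definitions side by side. A set $F$ is a distance-$1$, $k$-component order neighbor connective failure set precisely when every component of the subgraph of $G$ induced by $V \setminus N_1[F]$ has order less than $k$; and $F$ is a $k$-component order neighbor connectivity set precisely when every component of the subgraph of $G$ induced by $V \setminus N_1[F]$ has order less than $k$. These two conditions are verbatim identical, which establishes the ``if and only if'' claim: a set is a distance-$1$, $k$-component failure set if and only if it is a $k$-component order neighbor connectivity set. Since the two families of sets coincide, so do their minimum cardinalities, giving $\lambda_1^{(k)}(G) = \kappa_{nc}^{(k)}(G)$.

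For the inequality, I would apply Proposition~\ref{prop:ells} with distances $m = 1 \le \ell$, obtaining $\lambda_\ell^{(k)}(G) \le \lambda_1^{(k)}(G)$, and then chain this with the equality just derived to conclude $\lambda_\ell^{(k)}(G) \le \lambda_1^{(k)}(G) = \kappa_{nc}^{(k)}(G)$.

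Honestly, there is no substantive obstacle here; the ``proof'' is a definitional identification. The only points requiring a moment of care are to confirm that the neighborhood operator used in the two definitions is the same closed neighborhood (so the defining conditions truly match symbol for symbol), and to note explicitly that equality of the families of failure sets forces equality of the two minimum-cardinality parameters before appealing to Proposition~\ref{prop:ells}.
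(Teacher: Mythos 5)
Your proposal is correct and matches the paper's own (very brief) justification: the paper simply asserts that this proposition ``follows directly from the definition of the various parameters and the preceding results,'' which is precisely your definitional identification of the two failure-set notions at $\ell = 1$ combined with Proposition~\ref{prop:ells} applied with $m = 1 \le \ell$. Your explicit note that equality of the two families of sets forces equality of the minimum cardinalities is a reasonable spelling-out of what the paper leaves implicit.
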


If the maximum distance between any two vertices of the graph, the diameter, is less than $\ell$ then any nonempty failure set will fail the graph.

\begin{proposition}
If $0 < \diam(G) \le \ell$, then $\lambda_{\ell}^{(k)}(G) = 1$.
\end{proposition}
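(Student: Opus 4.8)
The plan is to prove the two bounds $\lambda_{\ell}^{(k)}(G) \le 1$ and $\lambda_{\ell}^{(k)}(G) \ge 1$ separately; together they give the claimed equality. First I would record two easy consequences of the hypothesis: since a disconnected graph is taken to have infinite diameter, $\diam(G) \le \ell$ forces $G$ to be connected, and $\diam(G) > 0$ forces $n = |V| \ge 2$, so in particular $V \neq \emptyset$.

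For the upper bound, I would choose any vertex $v \in V$ and set $F = \{v\}$. For every $u \in V$ we have $d(u, v) \le \diam(G) \le \ell$, so $u \in N_\ell[v]$; hence $N_\ell[F] = V$ and $V \setminus N_\ell[F] = \emptyset$. The subgraph induced by the empty set has no components, so the condition ``every component has order less than $k$'' holds vacuously and $F$ fails $G$. Thus $\lambda_{\ell}^{(k)}(G) \le |F| = 1$.

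For the lower bound, it suffices to check that $F = \emptyset$ is not a failure set. In that case $N_\ell[F] = \emptyset$, so $V \setminus N_\ell[F] = V$ and the induced subgraph is $G$ itself. Since $G$ is connected with $n \ge 2$ vertices, it is a single component of order $n$, and the standing requirement $k \le n$ gives $n \not< k$, so this component is not failed. Hence $\emptyset$ does not fail $G$, every failure set has cardinality at least $1$, and combining with the previous paragraph yields $\lambda_{\ell}^{(k)}(G) = 1$.

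The argument is short, and I do not anticipate a real obstacle; the only points requiring care are the vacuous-truth step (an empty vertex set induces a subgraph with no components, which therefore trivially fails $G$) and the appeal to the convention that a disconnected graph has infinite diameter, which is precisely what lets $\diam(G) \le \ell$ deliver the connectivity used in the lower bound.
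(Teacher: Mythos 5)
Your proof is correct and follows the same direct-from-definitions reasoning the paper relies on (it states the proposition without a written proof, noting only that a single vertex fails the whole graph when $\diam(G) \le \ell$): any singleton $F=\{v\}$ gives $N_\ell[F]=V$, and the empty set fails to be a failure set because the surviving graph is all of $G$, connected of order $n \ge k$. Your explicit handling of the lower bound via the standing assumption $k \le n$ and the connectivity forced by a finite diameter is exactly the detail the paper leaves implicit, so no issues.
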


\begin{proposition}
Let $G$ be disconnected and $G = \bigcup\limits_{i=1}^{N} G_i$.
Then the order of a minimum failure set of the graph $G$ is the sum of the orders of the minimum failure sets for each $G_i$, i.e.,
$$\lambda_{\ell}^{(k)}(W) = \sum_{i=1}^N \lambda_{\ell}^{(k)}(W_i).$$
\end{proposition}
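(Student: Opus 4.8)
The plan is to show that both the covering operator $N_\ell[\cdot]$ and the ``all survivor components have order less than $k$'' condition decompose cleanly along the connected components of $G$, and then to deduce the claimed identity from a routine two–inequality argument. Throughout I read the statement with $W, W_i$ replaced by $G, G_i$, the $G_i$ being the connected components of $G$ (or, more generally, a partition of $V$ into mutually non-adjacent induced subgraphs).

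First I would record the key structural observation. Since $G$ is the disjoint union of the $G_i$, any two vertices lying in distinct components are at infinite distance, so for every $S \subseteq V$ we have $N_\ell[S] = \bigcup_{i=1}^N N_\ell\big[S \cap V(G_i)\big]$, where each term is computed inside its own component. Consequently the subgraph of $G$ induced by $V \setminus N_\ell[S]$ is the disjoint union of the subgraphs of $G_i$ induced by $V(G_i)\setminus N_\ell[S\cap V(G_i)]$, and the connected components of the former are exactly the connected components of these pieces. Hence a set $F$ fails $G$ if and only if $F_i := F \cap V(G_i)$ fails $G_i$ for every $i$.

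Given this equivalence the two inequalities are short. For ``$\le$'', choose for each $i$ a minimum failure set $F_i$ of $G_i$; then $F := \bigcup_i F_i$ fails $G$ by the equivalence, and, the union being disjoint, $|F| = \sum_i |F_i| = \sum_i \lambda_\ell^{(k)}(G_i)$, so $\lambda_\ell^{(k)}(G) \le \sum_i \lambda_\ell^{(k)}(G_i)$. For ``$\ge$'', take a minimum failure set $F$ of $G$; by the equivalence each $F_i = F \cap V(G_i)$ fails $G_i$, hence $|F_i| \ge \lambda_\ell^{(k)}(G_i)$, and summing over the disjoint pieces gives $\lambda_\ell^{(k)}(G) = |F| = \sum_i |F_i| \ge \sum_i \lambda_\ell^{(k)}(G_i)$. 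Combining the two yields equality.

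I do not expect a genuine obstacle here; the only point needing care is bookkeeping around the constraint $1 \le k \le n$ in the definition of $\lambda_\ell^{(k)}$. If a component $G_i$ has fewer than $k$ vertices it is already failed by the empty set, so one should read $\lambda_\ell^{(k)}(G_i) = 0$ in that case — and this is precisely the convention that keeps both sides of the identity consistent (such components contribute nothing on the right, and correspondingly a minimum failure set of $G$ need place no vertices in them). With that understood, the whole argument is simply the remark that disjointness of the components turns both relevant conditions into independent, per-component conditions.
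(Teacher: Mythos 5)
Your proof is correct: the observation that $N_\ell[\cdot]$ and the component-order condition decompose over the connected components, followed by the two-inequality argument, is exactly the routine reasoning the paper relies on (it states this proposition without proof, as following directly from the definitions), and your reading of the statement's $W$, $W_i$ as $G$, $G_i$ and your remark that $\lambda_\ell^{(k)}(G_i)=0$ for components of order less than $k$ are the right way to handle the statement's notational slip and the constraint on $k$. Nothing further is needed.
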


\section{A Minimum Failure Set Algorithm}\label{sec2}

Finding a minimum failure set for distance domination on an arbitrary simple graph is an NP-hard problem, \cite{haynes1998domination}.  
The distance domination problem is a distance-$\ell$ $1$-component order, neighbor connectivity failure set problem, and is therefor NP-hard. It follows that the more generic $k$-component problem is NP-hard as well.  Consequently, we limit the scope of our algorithmic efforts to trees.

We use $T = (V, E)$ to indicate a rooted tree arbitrarily rooted at $r$, and for some $v \in V$ we use $T_v = (V_v, E_v)$ to denote the subtree of $T$ rooted at $v$. A node
 $v$'s neighbor is the parent if it is on the unique path from $v$ to $r$, and a child, $c \in C_v$, if it is not.  

In order to construct an algorithm that finds a minimum failure set, we need to compute the order of a tree component.  We do this as follows.

\begin{definition}
Let $v \in V$ and $F \subseteq V$. If there exists a $w \in F$ such that $v \in N_\ell[w]$ we say that component order $\comporder(F, v) = 0$.  If $v$ is not in $N_\ell[F]$ with children $C_v$, we recursively define $\comporder(F, v) = 1 + \sum_{c \in C_v} \comporder(F, c)$. Where $F$ is unambiguous, we omit it.
\end{definition}

\begin{lemma}
A component $K$ of the graph induced by $V \setminus F$ has an order of $\comporder(v)$ where $v$ is the root of the smallest subtree containing $K$.
\end{lemma}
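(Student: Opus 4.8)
The plan is to work with the subgraph $H = T[V \setminus N_\ell[F]]$ of surviving (un‑failed) vertices; since $T$ is a tree, $H$ is a forest and each of its components is itself a subtree of $T$. (When $\ell = 0$ one has $N_0[F] = F$, so this recovers the "$V \setminus F$" of the statement verbatim.) I would first pin down which vertex $v$ is meant by ``the root of the smallest subtree containing $K$'': because $K$ is connected in $T$, the unique $T$-path between any two of its vertices lies inside $K$ — connected vertex sets in a tree are convex — so $K$ together with the tree edges among its vertices is a subtree of $T$, and among all $T_w$ with $K \subseteq V_w$ the smallest is obtained by taking $w$ to be the vertex of $K$ of least depth. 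A short least-common-ancestor argument then shows this least-depth vertex $v$ actually lies in $K$ and that every vertex of $K$ is a descendant of (or equal to) $v$, i.e. $K \subseteq V_v$, so that $T_v$ is indeed the smallest subtree containing $K$.

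Next I would prove, by structural induction on $u$ working from the leaves of $T$ upward, the following sharper statement: for every $u \in V$, if $u \in N_\ell[F]$ then $\comporder(u) = 0$, and if $u \notin N_\ell[F]$ then $\comporder(u) = |K_u \cap V_u|$, where $K_u$ denotes the component of $H$ containing $u$. The failed case is immediate from the definition. In the surviving case, expand $\comporder(u) = 1 + \sum_{c \in C_u} \comporder(c)$ and split $C_u$ into failed children (each contributing $0$) and surviving children. A surviving child $c$ of a surviving vertex $u$ shares the edge $uc$ in $H$, hence $K_c = K_u$, and the inductive hypothesis gives $\comporder(c) = |K_u \cap V_c|$. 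Matching this against the partition $V_u = \{u\} \sqcup \bigsqcup_{c \in C_u} V_c$ intersected with $K_u$ yields $\comporder(u) = 1 + \sum_{c \text{ surviving}} |K_u \cap V_c| = |K_u \cap V_u|$, provided the failed children contribute nothing to $K_u \cap V_u$ either.

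That last provision is the \emph{crux} of the argument, and the step I expect to be the main obstacle: I must show that if $c$ is a failed child of the surviving vertex $u$, then $K_u \cap V_c = \emptyset$. This is exactly where convexity of components in a tree is used: any $x \in V_c$ can reach $u$ in $T$ only through $c$, so if $x$ were in $K_u$ the whole $T$-path from $x$ to $u$ — in particular the failed vertex $c$ — would have to lie in $H$, which is absurd. Once the induction is in place, applying it at $u = v$ gives $\comporder(v) = |K_v \cap V_v| = |K \cap V_v| = |K|$, using $v \in K$ and $K \subseteq V_v$ from the first paragraph; this is precisely the claimed equality.
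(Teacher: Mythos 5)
Your proof is correct, and it also correctly resolves the slight mismatch in the statement (reading ``$V\setminus F$'' as $V\setminus N_\ell[F]$, consistent with the definition of $\comporder$). Your route differs from the paper's in its induction scheme: the paper observes that $K$ is itself a tree rooted at its least-depth vertex and inducts on the order of $K$, declaring the rest immediate, whereas you run a structural induction over all of $T$ with the strengthened invariant that $\comporder(u)=0$ for failed $u$ and $\comporder(u)=\lvert K_u\cap V_u\rvert$ for surviving $u$, then specialize to $u=v$. What your version buys is precisely the two points the paper leaves unstated: the identification of ``the root of the smallest subtree containing $K$'' as the least-depth vertex of $K$, which lies in $K$ by convexity of connected vertex sets in a tree, and the crux that for a failed child $c$ of a surviving vertex $u$ one has $K_u\cap V_c=\emptyset$ (again by convexity, since any $T$-path from $V_c$ to $u$ passes through $c$), which is exactly what makes the recursion $\comporder(u)=1+\sum_{c\in C_u}\comporder(c)$ count $\lvert K_u\cap V_u\rvert$ and nothing more. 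The paper's induction on $\lvert K\rvert$ is shorter but silently relies on these same facts; your global invariant makes them explicit and would also be the form most directly reusable in the later correctness arguments about the algorithm.
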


\begin{proof}
The component $K$ is a tree itself and the component order of the root $r_K$ of $K$ is the order of $K$.  Proof by induction on the order of $K$ is immediate.
\end{proof}

\begin{remark}
Consider Algorithm \ref{algo:setSelectons}.  We prove in Theorem \ref{theorem:grand finaly} that the algorithm constructs a minimum failure set $F$. The algorithm takes in a vertex $v \in V$ as input.  Any vertex is acceptable, and every vertex of $V$ gets called by the recursive action of the algorithm.  The user of the algorithm only calls the algorithm on $r$, initializing $F \gets \emptyset$ when doing so.  For implementation details beyond those presented here, see \cite{Neimand_Code_Written_An_2022}.
\end{remark}

\begin{algorithm}
\DontPrintSemicolon

\KwIn{
\begin{itemize}
    \item Constants $k, \ell \in \mathbb N$ as described above. 
    \item A vertex $v$ of a tree $T$ with root $r$.  
    \item A global $F \subseteq V$.  If $v=r$, then $F \gets \emptyset$.
\end{itemize}
}
\KwOut{

\begin{itemize}
    \item $F \subseteq V$ is the failure set after the algorithm has run on $r$.
\end{itemize}

}
    
\lForEach{$c$ \upshape child of $v$ \label{algo.line:recursive}}{
    Run Algorithm \ref{algo:setSelectons} on $c$
}

Set component orders for $T_v$. \label{algo.line:set comp order}

\lIf{$v = r$  {\bf and} $\exists u \in N_\ell[r]$ \upshape such that $\comporder(u) \ge k$ \label{algo.line:root plan}}{
    Add $v$ to $F$.
}

\ElseIf{$\exists u$ \upshape such that $u$ is an $\ell$-generation descendent of $v$ with $\comporder(u) \ge k$ \label{algo.line:fail filter}}{
    Add $v$ to $F$. \label{algo.linen:fail f and mark}
}

\caption{Selects Vertices for Failure}\label{algo:setSelectons}
\end{algorithm}

\begin{example}
\label{exmp: run through}
We will run the algorithm on a small example problem with $\ell := 1$ and $k := 1$.  The root of our graph will be $r=a$.  We'll set $a$'s children to $a^2, ab$ and $ac$.  We'll set $a^2$'s children to $a^3$ and $a^2 b$.  We'll set $ab$'s child to $aba$ and $aba$'s child to $aba^2$.  See Figure  \ref{fig:example graph}.
\end{example}

\begin{wrapfigure}{R}{0.3\textwidth} 
\caption{Example \ref{exmp: run through}}
\label{fig:example graph}
    \centering
    \includegraphics[width=0.3\textwidth]{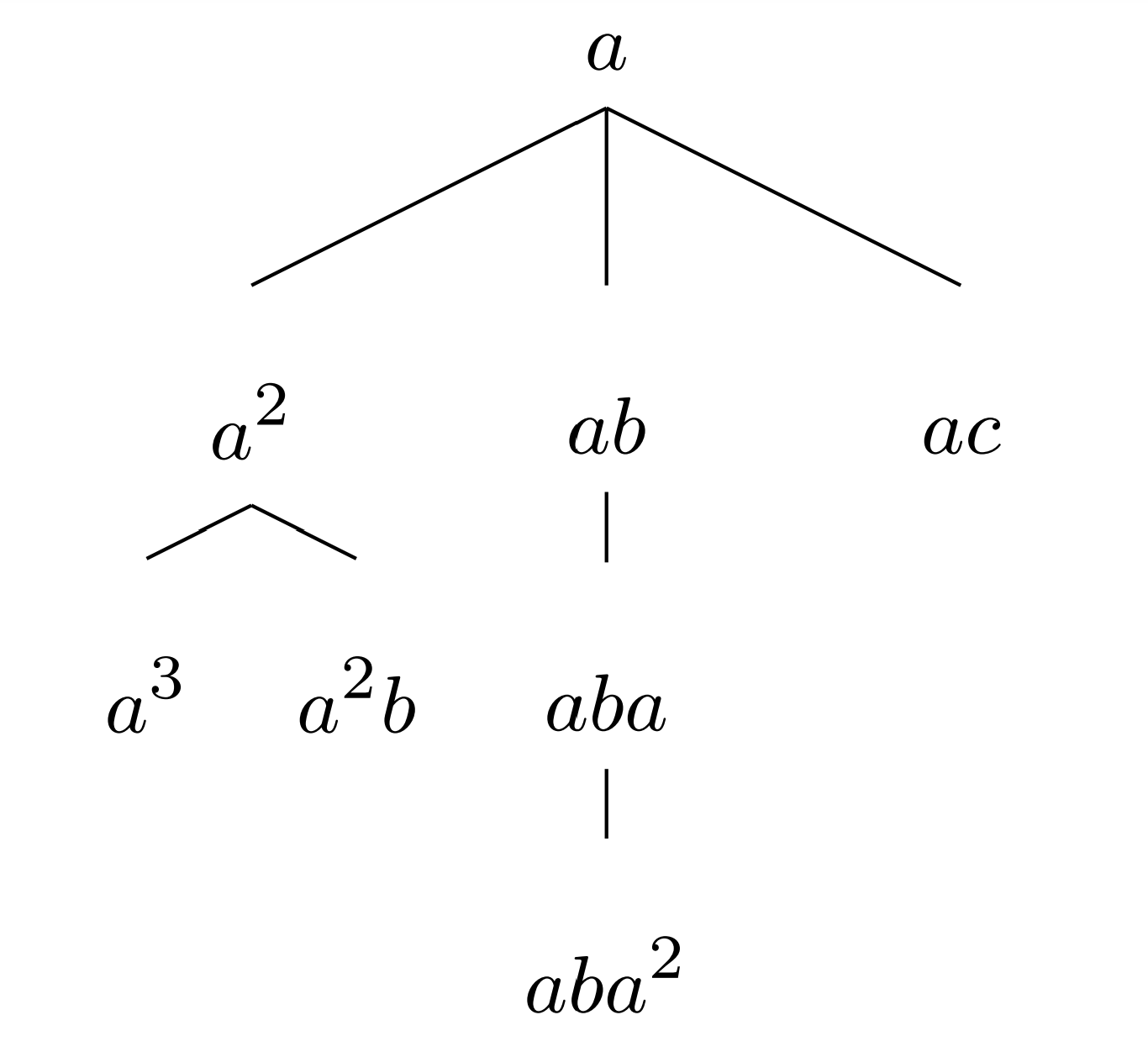}
\end{wrapfigure}

 Always before calling the algorithm on the root, we set $F \gets \emptyset$.  That is to say, the failure set is empty, no vertices have been marked for failure. We call Algorithm \ref{algo:setSelectons} on $v \gets r=a$. On Line \ref{algo.line:recursive} we immediately call Algorithm \ref{algo:setSelectons} on $a^2$, $ab$, and $ac$.  We'll look at each of these in turn and, without loss of generality, begin with $v \gets a^2$.  On Line \ref{algo.line:recursive} we call Algorithm \ref{algo:setSelectons} on $a^2$'s children and proceed to $v \gets a^3$.  

Since $a^3$ has no children, we proceed to Line \ref{algo.line:set comp order} with $\comporder(a^3) \gets 1$. On Line \ref{algo.line:root plan} we note that $a^3 \ne r$ and proceed to Line \ref{algo.line:fail filter}.  Since $a^3$ has no 1-generation descendants (children), the condition is false, and we complete Algorithm \ref{algo:setSelectons} on $a^3$. 

We resume Algorithm \ref{algo:setSelectons}'s progress for $v = a^2$ and proceed to call Algorithm \ref{algo:setSelectons} on the next child of $a^2$, namely $a^2 b$.  Algorithm \ref{algo:setSelectons} runs on $a^2 b$ exactly as it did on $a^3$ so we omit the details.

Once Line \ref{algo.line:recursive} is completed for $v = a^2$ we continue to Line \ref{algo.line:set comp order} where $\comporder(a^3) \gets 1, \comporder(a^2b) \gets 1$ and $\comporder(a^2) \gets 1 + 1 + 1 = 3$. On Line \ref{algo.line:root plan} we note that $a^2 \ne r$ so we proceed to Line \ref{algo.line:fail filter} and check if there exists a 1-generation descendent of $a^2$ with a component order greater than or equal to 1, and find, without loss of generality, that $a^3$ has $\comporder(a^3) \ge 1$.  Therefore we add $a^2$ to $F$, equivalently $ F \gets F \cup \{a^2\}$. The vertices $a^3, a^2b, a^2$, and $a$ are all in $N_\ell[a^2]$ and near a failed vertex.  We have completed Algorithm \ref{algo:setSelectons}'s run on $a^2$ and return to $v = a$.  

Algorithm \ref{algo:setSelectons} was on Line \ref{algo.line:recursive} for $v=a$, and now proceeds to call Algorithm \ref{algo:setSelectons} on $ab$, which in turn immediately calls Algorithm \ref{algo:setSelectons} on $aba$ calling it on $aba^2$.  Since $v = aba^2$ is a leaf, the algorithm runs just as it did on $a^3$.  On completion of $v= aba^2$ we return to $v = aba$.  With $aba$ having no other children, the algorithm proceeds to Line \ref{algo.line:set comp order} setting $\comporder(aba^2) \gets 1$ and then $\comporder(aba) \gets 1+1 = 2$. Checking the condition on Line \ref{algo.line:root plan} we note that $aba \ne r$, and on Line \ref{algo.line:fail filter} we find that $aba$ does have a 1-generation descendent, $aba^2$ with $\comporder(aba^2) \ge k = 1$, so we add $aba$ to $F$ giving $F = \{a^2, aba\}$.  The algorithm completes its run on $aba$ and we return to the run $v = ab$.

The algorithm has completed its recursive calls to all of $ab$'s children, so we proceed to Line \ref{algo.line:set comp order}.  Since $aba^2$, $aba$, and $ab$ are all in $N_1[aba]$ they have component orders of 0. Proceeding to lines \ref{algo.line:root plan} and \ref{algo.line:fail filter}, the conditions are not met and the algorithm completes its run on $ab$, returning to $a$.

On Line \ref{algo.line:recursive}, $a$ has one last child, $ac$.  We call Algorithm \ref{algo:setSelectons} on $ac$, which is a leaf so the algorithm runs as it did on $a^3$, and we continue to Line \ref{algo.line:set comp order} for $v = a$.  Note, every vertex in the subtrees rooted at $a^2$ and  $ab$ is in the neighborhood of some $f \in F$ so all those vertices have component order 0, as does $a$ itself.  All that remains is $\comporder(ac) \gets 1$.  On Line \ref{algo.line:root plan} we note $a = r$ and $ac \in N_1[a]$ with $\comporder(ac) \ge 1$, so we fail $a$ and skip checking the \textbf{else if} condition.  The algorithm is complete leaving $F = \{a, a^2, aba\}$ and $V \subseteq N_1[F]$.

If the Algorithm has only run on a subtree rooted at $v$, and that subtree contains a component with order at least $k$, that component must be failed by some vertex outside the subtree.  The following Lemma and Theorem guarantee this.

\begin{lemma}
\label{lemma:proof of dov's algo with induction}
Let $v \in V$. If Algorithm \ref{algo:setSelectons} has run on $v$, and there exists a $u \in V_v$ with $\comporder(F, u) \ge k$, then $u \in N_{\ell - 1}[v]$.
\end{lemma}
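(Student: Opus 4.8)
The plan is to argue by induction on the height of $T_v$ (equivalently on $|V_v|$), using the statement of the lemma for each child of $v$ as the inductive hypothesis. Fix the top-level execution of Algorithm~\ref{algo:setSelectons} on $r$; for a vertex $x$ write $F_x$ for the value of the global set $F$ at the moment the recursive call on $x$ returns. Since the call on a child $c$ of $v$ runs and finishes before the remaining lines of the call on $v$, and the algorithm never deletes from $F$, we have $F_c\subseteq F_v$. I will also use the elementary observation that $\comporder$ is order-reversing in its first argument — $F'\subseteq F''$ implies $\comporder(F'',w)\le\comporder(F',w)$ for all $w$ — which follows by recursion on the subtree at $w$ from $N_\ell[F']\subseteq N_\ell[F'']$, and merely records that enlarging the failure set cannot enlarge a surviving component.

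The heart of the argument is the sub-claim: \emph{after the call on $v$ returns, no $w\in V_v$ with $d(w,v)=\ell$ satisfies $\comporder(F_v,w)\ge k$.} Suppose such a $w$ existed. It is an $\ell$-generation descendant of $v$ and lies in $N_\ell[v]$. Let $F'$ be the state of $F$ when Line~\ref{algo.line:root plan}, and then Line~\ref{algo.line:fail filter}, are evaluated in the call on $v$; then $F'\subseteq F_v$, so $\comporder(F',w)\ge\comporder(F_v,w)\ge k$. If $v\ne r$, Line~\ref{algo.line:root plan} is inapplicable, the \textbf{else if} on Line~\ref{algo.line:fail filter} is tested, and $w$ witnesses its condition, so $v$ is added to $F$ on Line~\ref{algo.linen:fail f and mark}; then $w\in N_\ell[v]\subseteq N_\ell[F_v]$ forces $\comporder(F_v,w)=0<k$, a contradiction. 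If $v=r$, then $w\in N_\ell[r]$ with $\comporder(F',w)\ge k$ makes the condition on Line~\ref{algo.line:root plan} true, so $r$ is added to $F$ and again $\comporder(F_v,w)=0<k$. This proves the sub-claim.

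For the induction, let $u\in V_v$ with $\comporder(F_v,u)\ge k$. If $u=v$, then $\ell=0$ would give $d(u,v)=0=\ell$ and $\comporder(F_v,u)\ge k$, contradicting the sub-claim; hence $\ell\ge 1$ and $u=v\in N_{\ell-1}[v]$. If instead $u\in V_c$ for a child $c$ of $v$, then $F_c\subseteq F_v$ together with the order-reversing property give $\comporder(F_c,u)\ge\comporder(F_v,u)\ge k$, so the inductive hypothesis applied to $c$ yields $u\in N_{\ell-1}[c]$, i.e.\ $d(u,c)\le\ell-1$, hence $d(u,v)=d(u,c)+1\le\ell$. If $d(u,v)=\ell$, the sub-claim (with $w=u$) is violated; therefore $d(u,v)\le\ell-1$, i.e.\ $u\in N_{\ell-1}[v]$. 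The base case $v$ a leaf is exactly the sub-case $u=v$, since then $V_v=\{v\}$.

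I expect the only real obstacle to be the bookkeeping inside the sub-claim: the component orders tested on Lines~\ref{algo.line:root plan} and~\ref{algo.line:fail filter} are those of the intermediate set $F'$, not of $F_v$, so the hypothesis $\comporder(F_v,u)\ge k$ must first be pushed back to $F'$ via the order-reversing property, and one must keep the root's test on Line~\ref{algo.line:root plan} (a plain \textbf{if}) distinct from the \textbf{else if} on Line~\ref{algo.line:fail filter}. The reading $N_{-1}[v]=\emptyset$ when $\ell=0$ needs no special treatment: the sub-claim then says no vertex of $V_v$ survives in a component of order $\ge k$, which is exactly what the induction establishes.
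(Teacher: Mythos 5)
Your proof is correct and follows essentially the same route as the paper's: induction over the subtree, with the key observation that if a descendant at distance exactly $\ell$ retained $\comporder \ge k$, then Line~\ref{algo.line:fail filter} (or Line~\ref{algo.line:root plan} at the root) would have added $v$ to $F$ and zeroed that component order. Your version merely makes explicit the bookkeeping the paper leaves implicit (the intermediate states $F_c \subseteq F' \subseteq F_v$ and the monotonicity of $\comporder$ in the failure set), which is a welcome tightening rather than a different argument.
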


\begin{proof}
We will prove by induction.  For the base case, we will apply Algorithm \ref{algo:setSelectons} to a leaf, $v \in V$.  Let $u \in V_v = \{v\}$.  We have $u = v$ with $\comporder(u) = 1$. If $k = 1$, then for any $\ell \ge 1$, we have $u \in \{v\} \subseteq N_{\ell-1}[v]$, the desired result. If $\ell = 0$ then the \textbf{else if} condition on line \ref{algo.line:fail filter} was met and $v$ was added to the failure set, insuring $\comporder(v) = 0$.  If $k>1$, then the desired result is achieved trivially.

For the inductive assumption, Line \ref{algo.line:recursive} calls Algorithm \ref{algo:setSelectons} on all the children of $v$. We assume that for any child $c$ of $v$, if $u \in V_c$ and $\comporder(u) \ge k$ then $d(u,c) < \ell$. Equivalently, $d(u, v) < \ell + 1$.  If $d(u,v) < \ell$, or if there is no such $u$, then the desired result is achieved, so we will assume that before the algorithm proceeds to Line \ref{algo.line:set comp order}, their exists a $u$ such that $d(u,v) = \ell$ and $\comporder(u) \ge k$. Line \ref{algo.line:fail filter}, or if $v=r$ Line \ref{algo.line:root plan}, fails $v$ resetting $\comporder(u) \gets 0$. Once the algorithm has run, no such $u$ exists.
\end{proof}

\begin{proposition}
\label{prop:is failure set}
Let Algorithm \ref{algo:setSelectons} have run on $r$. The set $F \subseteq V$ is a failure set for $T$.
\end{proposition}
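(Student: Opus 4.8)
The plan is to read the result off directly from Lemma \ref{lemma:proof of dov's algo with induction} together with the earlier lemma that equates the order of a component of $T[V \setminus N_\ell[F]]$ with the value of $\comporder(F,\cdot)$ at the root of that component. Recall that, by definition, $F$ is a failure set for $T$ exactly when every component of the subgraph induced by $V \setminus N_\ell[F]$ has order less than $k$, so it suffices to rule out a surviving component of order at least $k$.

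I would argue by contradiction. Suppose the algorithm has run on $r$ but $F$ is not a failure set; then there is a component $K$ of $T[V \setminus N_\ell[F]]$ with $|K| \ge k$. Let $v_K$ be the vertex of $K$ closest to $r$, equivalently the root of the smallest subtree of $T$ containing $K$. By the component-order lemma, $\comporder(F, v_K) = |K| \ge k$, and trivially $v_K \in V = V_r$. Applying Lemma \ref{lemma:proof of dov's algo with induction} with $v = r$ then gives $v_K \in N_{\ell-1}[r] \subseteq N_\ell[r]$.

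The contradiction comes from examining Line \ref{algo.line:root plan}. For the outermost call, on $r$, this line (and not the \textbf{else if} branch, whose guard is subsumed by it when $v = r$, since the $\ell$-generation descendants of $r$ lie in $N_\ell[r]$) is the last step that can alter $F$, so after it runs neither $F$ nor any component order changes further. If Line \ref{algo.line:root plan} added $r$ to $F$, then every vertex of $N_\ell[r]$, and in particular $v_K$, has $\comporder(F,\cdot) = 0 < k$; if it did not, then its guard was false, so no vertex of $N_\ell[r]$ --- again in particular $v_K$ --- has component order at least $k$. Either way this contradicts $\comporder(F, v_K) \ge k$, so $F$ is a failure set.

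I expect the only real care needed is the timing argument of the last paragraph: one must be sure that the component orders appearing in Lemma \ref{lemma:proof of dov's algo with induction} are the terminal ones, and that the root call makes no modification to $F$ after Line \ref{algo.line:root plan}, so that the truth value of that line's guard faithfully records the final state of $F$. A minor point worth spelling out is that $v_K$ is simultaneously a vertex of $V_r$ and the root of the minimal subtree containing $K$, which is exactly what lets us feed it to both lemmas at once.
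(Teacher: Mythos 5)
Your argument is correct and takes essentially the same route as the paper's own proof: it combines Lemma \ref{lemma:proof of dov's algo with induction} applied at $r$ with the guard on Line \ref{algo.line:root plan} (and the component-order lemma linking $\comporder$ to component sizes) to rule out any surviving vertex of component order at least $k$. The paper compresses this into two sentences; you only add the explicit contradiction framing and the timing/subsumption details (that the \textbf{else if} branch cannot fire at $r$ when Line \ref{algo.line:root plan} fails) that the paper leaves implicit.
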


\begin{proof}
By Lemma \ref{lemma:proof of dov's algo with induction} we have that any $v \in V$ with $\comporder(v) \ge k$ is in $N_{\ell-1}[r]$. Algorithm \ref{algo:setSelectons} Line \ref{algo.line:root plan} guarantees that any such $v$ was assigned $\comporder(v) \gets 0$. 
\end{proof}

Next, we would like to show that for any arbitrary failure set of the tree, a surjective mapping may be constructed from that failure set, to the set constructed by the Algorithm.

Algorithm \ref{algo:Check Failure} plays a roll in our proof that the failure set $F$, created by Algorithm \ref{algo:setSelectons}, is a minimum failure set.  It is not meant to be used for any other purpose. When called on $r$, Algorithm \ref{algo:Check Failure} builds a mapping $M \mid W \to V$. Before Algorithm \ref{algo:Check Failure} is called on $r$, this mapping is initialized to the identity function. The mapping is a global variable whose value changes without being explicitly returned or passed.  With each recursive call to Algorithm \ref{algo:Check Failure}, $M$ may be modified resulting with its final value a surjective mapping onto $F$ (Lemma \ref{lemma:surjective}).

\begin{definition}
For a mapping $M \mid W \to V$, the image of $M$ is denoted $M(W)$. We also define $M_v(W) := (M(W) \cap V_v) \setminus \{v\}$ after Algorithm \ref{algo:Check Failure} has run on all of $v$'s children.
\end{definition}

\begin{algorithm}
\DontPrintSemicolon

\KwIn{
\begin{itemize}
    \item $v \in V$
    \item $W \subseteq V$
    \item Global $M \mid W \to V$. If $v = r$, then $M \gets$ the identity function.
\end{itemize}
}

\lForEach{$c$ \upshape child of $v$ \label{algo.line:check recursive}}{
    Run Algorithm \ref{algo:Check Failure} on $c$.
}

Set component orders for $T_v$ with the failure set $M_v(W)$. \label{algo.line:check set comp size}

\If{$\forall u :$ \upshape $u$ is an $\ell$-gen. descendent of $v$, $\comporder(M_v(W), u) < k$, $v \ne r$ \label{algo.line:check fail filter}}{
    $\forall w \in W:M(w) = v$, reset $M(w) \gets v$'s parent. \label{algo.line:check move up}
}

\caption{Constructs a Mapping to F}\label{algo:Check Failure}
\end{algorithm}

\begin{definition}
For convenience, if the statement on line \ref{algo.line:check move up} is reached and for all $w$, $M(w)$ is reset to $v$'s parent $p$, we say that \textbf{$v$'s fail is moved up}.
\end{definition}

\begin{lemma}
\label{lemma:moving g doesn't change fail}
Let Algorithm \ref{algo:Check Failure} run on $v \in V$. If $W$ is a failure set, then $M(W)$ is a failure set.
\end{lemma}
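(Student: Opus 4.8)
The plan is to prove the stronger statement that ``$M(W)$ is a failure set'' is an invariant of the execution which Algorithm \ref{algo:Check Failure} initiates on $r$; the lemma is then the instance of this invariant recorded at the moment the recursive call on $v$ returns. When $v=r$ the map $M$ is set to the identity on $W$, so $M(W)=W$, which is a failure set by hypothesis. The only line that changes $M$ is line \ref{algo.line:check move up}, which fires for a vertex $v\ne r$ precisely when every $\ell$-generation descendant $u$ of $v$ has $\comporder(M_v(W),u)<k$, and whose sole effect on the image is: if $v\in M(W)$ then $M(W)$ is replaced by $(M(W)\setminus\{v\})\cup\{p\}$ where $p$ is the parent of $v$, and otherwise $M(W)$ is unchanged. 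So, by induction on the number of times line \ref{algo.line:check move up} fires, it is enough to prove the following claim: \emph{if $S$ is a failure set, $v\in S$, $p$ is the parent of $v$, and $\comporder((S\setminus\{v\})\cap V_v,u)<k$ for every $\ell$-generation descendant $u$ of $v$, then $S':=(S\setminus\{v\})\cup\{p\}$ is a failure set.} (Here $(S\setminus\{v\})\cap V_v$ is exactly the $M_v(W)$ the algorithm tests, since $v\in S$.)

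To prove the claim I would first identify the newly uncovered vertices. Because $S\setminus\{v\}\subseteq S'$ and $p\in S'$, any vertex lying in $V\setminus N_\ell[S']$ but not in $V\setminus N_\ell[S]$ must lie in $N_\ell[v]\setminus N_\ell[p]$; using $d(u,p)=d(u,v)+1$ for $u\in V_v$ and $d(u,v)=d(u,p)+1$ for $u\notin V_v$, this set is exactly $\{u\in V_v:d(u,v)=\ell\}$, i.e.\ the $\ell$-generation descendants of $v$. Now let $C$ be any component of the graph induced by $V\setminus N_\ell[S']$. If $C\subseteq V\setminus N_\ell[S]$ then $C$ lies inside a component of the graph induced by $V\setminus N_\ell[S]$, so $|C|<k$ since $S$ is a failure set. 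Otherwise $C$ contains some $\ell$-generation descendant $u_0$ of $v$; every vertex on the path from $u_0$ up to $v$ other than $u_0$ itself is within distance $\ell$ of $p$ (the parent of $u_0$ is at distance $\ell$ from $p$, the higher ancestors are nearer), hence is covered by $S'$ and is not in $C$, so $C$ cannot pass above $u_0$ and $C\subseteq V_{u_0}$.

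It then remains to show $N_\ell[S']\cap V_{u_0}=N_\ell[M_v(W)]\cap V_{u_0}$. Neither $p$ nor $v$ covers any vertex of $V_{u_0}$ in a way relevant to $S'$: every $x\in V_{u_0}$ has $d(x,p)\ge \ell+1$, and $v$ would cover only $u_0$ but $v\notin S'$; likewise any element of $S$ outside $V_v$ is at distance at least $1+\ell$ from every vertex of $V_{u_0}$ and so covers none of them. Hence the only members of $S'$ that cover vertices of $V_{u_0}$ are those of $(S\setminus\{v\})\cap V_v=M_v(W)$, giving the claimed equality. Therefore $C$ is the component of $u_0$ in the graph induced on $V_{u_0}$ by removing $N_\ell[M_v(W)]$, so $|C|=\comporder(M_v(W),u_0)$ by the preceding lemma relating $\comporder$ to component order, and this is $<k$ by hypothesis. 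Thus every component of $V\setminus N_\ell[S']$ has order less than $k$, so $S'$ is a failure set, which finishes the induction and the lemma.

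The crux — and the step I expect to fight with — is the localization in the last two paragraphs: arguing that a newly surviving component is trapped inside the subtree $V_{u_0}$ rooted at the relevant $\ell$-generation descendant, and that within that subtree the survivors under $S'$ are literally the survivors under $M_v(W)$, so that the condition $\comporder(M_v(W),u_0)<k$ actually tested by line \ref{algo.line:check fail filter} controls $|C|$. This is where all the distance bookkeeping (and the use of $p$ being the parent of $v$, so $d(u_0,p)=\ell+1$) lives, and it is also where the degenerate case $\ell=0$ needs a separate word: there the ``$\ell$-generation descendant of $v$'' is $v$ itself, $p\in S'$ already blocks any escape from $V_v$, and the rest of the argument is unchanged.
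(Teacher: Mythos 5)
Your proposal is correct and follows essentially the same route as the paper: both reduce the lemma to showing that a single firing of Line \ref{algo.line:check move up} preserves the failure-set property, the crux in each case being that the only vertices that can lose coverage are the $\ell$-generation descendants of $v$, and that their post-move component order is exactly the quantity $\comporder(M_v(W),\cdot)$ tested on Line \ref{algo.line:check fail filter}. You merely carry out the distance and component bookkeeping (trapping a new component inside $V_{u_0}$ and matching survivors under $S'$ and $M_v(W)$ there) more explicitly than the paper's case analysis over vertices at distance $<\ell$, $=\ell$, $>\ell$ from $v$.
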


\begin{proof}
Let's assume $W$ is a failure set. The starting value of $M$ is the identity function, so we need only verify that when $M$ is changed, on Line \ref{algo.line:check move up}, no new components of order greater than or equal to $k$ are created. Note, $\comporder(M_v(W), u) \le \comporder(M(W), u)$, so by considering $\comporder(M_v(W), u)$ instead of $\comporder(M(W), u)$ there is no risk of creating a new component of order greater than or equal to $k$.

The $M(W)$ components of $T$ that may be affected by moving some $v$'s fail can be divided into two categories. Those to whom the only path to $v$ is through $p$, the parent of $v$, and those through whom the only path is through a child of $v$.

Vertices connected to $v$ through $p$ can have their $M(W)$-component order only reduced when $v$'s fail is moved up, either to 0 because those vertices are in $N_\ell[p]$, or to less than what they were because a descendent was set to component order 0. 

A vertex $u$ connected to $v$ through $c$, a child of $v$, can fall into 3 disjoint sub categories: $d(u,v) < \ell$, $d(u,v) = \ell$, and $d(u,v)> \ell$.  Any $u \in N_{\ell - 1}[v]$ still has $\comporder(M(W), u) = 0$ after the change since $u \in N_\ell[p]$. Any $u$ with $d(u,v) > \ell$ remains unchanged in component order since $u$'s descendants are unchanged in their component orders.  

We will now consider a vertex $u$ with $d(u,v) = \ell$, that is, an $\ell$-generation descendent of $v$. The algorithm looks at the component order of $u$ as though $v$ and its ancestors, are not in $M(W)$, which tells us what the component order will be if $v$'s fail were moved up, as $u \in N_\ell[v] \setminus N_\ell[p]$.  If the component's order will be greater than $k$, then the condition on Line \ref{algo.line:check fail filter} is false and the statement on Line \ref{algo.line:check move up} is never reached. Consequently, $v$'s fail is not moved up.  If we'll have $\comporder(u) < k$, then $v$'s fail is moved up.  In either case, all components continue to have order less than $k$.
\end{proof}





\begin{lemma}
\label{lemma:v in M(W)}
Let $W$ be a failure set, and Algorithm \ref{algo:Check Failure} have run on $v \in V$. If for some $u$ that is an $\ell$-generation descendent of $v$, we have $\comporder(M_v(W), u) \ge k$, then $v \in M(W)$.
\end{lemma}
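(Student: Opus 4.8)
\noindent\emph{Proof idea.}
The plan is to argue by contradiction. Suppose Algorithm~\ref{algo:Check Failure} has finished its call on $v$ and yet $v\notin M(W)$; I will show this forces $M(W)$ to leave a component of order at least $k$, contradicting Lemma~\ref{lemma:moving g doesn't change fail}. First a bookkeeping remark: because $u$ is an $\ell$-generation descendant of $v$ with $\comporder(M_v(W),u)\ge k$, the universally quantified condition on Line~\ref{algo.line:check fail filter} fails during $v$'s own invocation, so Line~\ref{algo.line:check move up} is not executed for $v$. Hence $M$ is unchanged by $v$'s call beyond what the recursive calls on $v$'s children produced, so $v\notin M(W)$ after $v$'s call is the same as $v\notin M(W)$ in the snapshot defining $M_v(W)$, and in that snapshot $M(W)\cap V_v=M_v(W)$.

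Now the heart of the argument, a locality claim inside the subtree rooted at $u$: for every $y\in V_u$ (and $V_u\subseteq V_v$),
$$y\in N_\ell[M(W)]\iff y\in N_\ell[M_v(W)].$$
One implication is immediate from $M_v(W)\subseteq M(W)$. For the other, let $x\in M(W)$ with $d(x,y)\le\ell$. If $x\notin V_v$, the unique path in $T$ from $x$ to $y$ runs through $v$, so $d(x,y)=d(x,v)+d(v,y)\ge 1+(d(v,u)+d(u,y))\ge 1+\ell$, impossible; hence $x\in M(W)\cap V_v=M_v(W)$. Plugging this equivalence into the recursive definition of $\comporder$ and inducting from the leaves of $T_u$ up to $u$ (the children of any vertex of $V_u$ again lie in $V_u$) gives $\comporder(M(W),y)=\comporder(M_v(W),y)$ for all $y\in V_u$; in particular $\comporder(M(W),u)=\comporder(M_v(W),u)\ge k$.

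To close, apply the lemma identifying $\comporder$ with component order: the component of the subgraph of $T$ induced by $V\setminus N_\ell[M(W)]$ that contains $u$ has order at least $\comporder(M(W),u)\ge k$. But $W$ is a failure set, so by Lemma~\ref{lemma:moving g doesn't change fail} so is $M(W)$, whence every such component has order less than $k$ --- the contradiction we wanted. Therefore $v\in M(W)$.

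The step I expect to be the real obstacle is the locality equality $\comporder(M(W),u)=\comporder(M_v(W),u)$: it is exactly where one must separate coverage of $V_u$ originating inside $T_v$ from coverage originating outside it, and the fact that $u$ sits a full $\ell$ generations below $v$ (not fewer) is precisely what kills the outside contribution. The remaining work --- tracking which snapshot of the global mapping $M$ is current, and invoking Lemma~\ref{lemma:moving g doesn't change fail} so that $M(W)$ stays a failure set --- is routine.
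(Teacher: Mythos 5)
Your proposal is correct and takes essentially the same approach as the paper: both arguments rest on Lemma \ref{lemma:moving g doesn't change fail} (so $M(W)$ is a failure set) together with the key distance observation that, because $d(v,u)=\ell$, any vertex of $M(W)$ outside $V_v$ is at distance greater than $\ell$ from every vertex of $V_u$, so only $v$ itself can reduce $u$'s component. You package this as a contradiction with an explicit locality/$\comporder$-equality step, while the paper directly exhibits the vertex $g\in M(W)$ that lowers $\comporder(M_v(W),u)$ and pins it to $g=v$; the content is the same.
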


\begin{proof}
Lemma \ref{lemma:moving g doesn't change fail} tells us $M(W)$ is a failure set, so there exists $g \in M(W)$ such that $\comporder(M_v(W) \cup \{g\}, u) < \comporder(M_v(W), u)$. It follows that $g \in (V \setminus V_v) \cup \{v\}$ and there exists a $w$ that is a descendent of $v$, with $w \in N_{\ell}[g]$. For all such $w$, all paths from $g$ to $w$ go through both $v$ and $u$. Therefore, $d(w, g) \ge d(u,g) \ge d(u,v) = \ell$. Equalities exist only when $w=u$ and $g = v$, the desired result.
\end{proof}

\begin{definition}
Let $F_v$ be the failure set generated by Algorithm \ref{algo:setSelectons} when run on all the descendants of $v$.
\end{definition}

\begin{lemma}
\label{lemma:surjective}
Let $W \subseteq V$ be a failure set and Algorithm \ref{algo:Check Failure} have run on $r$ and $W$ generating $M$. Let Algorithm \ref{algo:setSelectons} have run on $r$ generating $F$.
\begin{enumerate}
    \item \label{lemma.item:MW in Fr} $M(W) \subseteq F \cup  \{r\}$, and
    \item \label{lemma.item:surjective} $M(W) \supseteq F$.
\end{enumerate} 
\end{lemma}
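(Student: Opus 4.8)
The plan is to extract both inclusions from one stronger, \emph{local} identity, proved by induction over the tree: for every non-root vertex $v$,
$$M(W)\cap V_v \;=\; F\cap V_v .$$
Since $V\setminus\{r\}$ is the disjoint union of the $V_c$ over the children $c\in C_r$, this identity immediately gives part~\ref{lemma.item:MW in Fr}, because $M(W)=\big(\bigcup_{c\in C_r}M(W)\cap V_c\big)\cup\big(M(W)\cap\{r\}\big)\subseteq F\cup\{r\}$; and the same decomposition reduces part~\ref{lemma.item:surjective} to the single implication $r\in F\Rightarrow r\in M(W)$, since $\bigcup_{c\in C_r}F\cap V_c=\bigcup_{c\in C_r}M(W)\cap V_c\subseteq M(W)$. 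For that implication I would argue by contradiction: if $r\in F$ but $r\notin M(W)$, the identity forces $M(W)=\bigcup_{c\in C_r}F\cap V_c=F_r$, yet Line~\ref{algo.line:root plan} of Algorithm~\ref{algo:setSelectons} having fired means $F_r$ leaves a component of order at least $k$, contradicting Lemma~\ref{lemma:moving g doesn't change fail}, which says $M(W)$ is a failure set.

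Next I would prove the local identity by induction on $|V_v|$, ranging over non-root $v$. The base case is a non-root leaf: then $V_v=\{v\}$ and $M_v(W)=F_v=\emptyset$, so both algorithms make their choice at $v$ from $\comporder(\emptyset,v)$ alone, and the only delicate point is the degenerate regime $\ell=0$, $k=1$, where one needs that a failure set must be all of $V$ in order to conclude $v\in W$. For the inductive step the linchpin is the equality $M_v(W)=F_v$ at the moment Algorithm~\ref{algo:Check Failure} performs its test at $v$: indeed $M_v(W)=\bigcup_{c\in C_v}\big(M(W)\cap V_c\big)$, each $M(W)\cap V_c$ has already reached the value it will have at termination, and the inductive hypothesis identifies that value with $F\cap V_c$, whose union over the children is exactly $F_v$.

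Granting $M_v(W)=F_v$, the test Algorithm~\ref{algo:Check Failure} runs at $v$ --- does some $\ell$-generation descendant $u$ of $v$ have $\comporder\ge k$? --- is verbatim the test in the \textbf{else if} branch of Algorithm~\ref{algo:setSelectons}, evaluated against the same set; hence $v$'s fail is moved up precisely when $v\notin F$. If $v\notin F$, then $v$'s fail is moved up and $v\notin M(W)$ from then on, since Line~\ref{algo.line:check move up} only reassigns elements strictly toward $r$, so $M(W)\cap V_v=M(W)\cap(V_v\setminus\{v\})=F\cap(V_v\setminus\{v\})=F\cap V_v$. If $v\in F$, the descendant $u$ witnessing this has $\comporder(M_v(W),u)=\comporder(F_v,u)\ge k$, so Lemma~\ref{lemma:v in M(W)} gives $v\in M(W)$, and again $M(W)\cap V_v=\big(F\cap(V_v\setminus\{v\})\big)\cup\{v\}=F\cap V_v$. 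This closes the induction.

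The step I expect to be the real obstacle is the one quietly invoked above: that $M(W)\cap V_c$ is already in its terminal state when the test at $v$ is run. I would establish this by a monotonicity observation --- every execution of Line~\ref{algo.line:check move up} reassigns an element strictly closer to the root, and only executions triggered at vertices of $V_c$ can alter $M(W)\cap V_c$; since all such executions occur inside the recursive call on $c$, nothing changes $M(W)\cap V_c$ afterward, neither the test at $v$, nor the tests at $v$'s ancestors, nor the processing of sibling subtrees. This is the backbone of the argument: it is what makes the identity $M_v(W)=F_v$ legitimate, after which the lemma follows by the short deductions above, appealing only to Lemmas~\ref{lemma:moving g doesn't change fail} and~\ref{lemma:v in M(W)}.
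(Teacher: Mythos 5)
Your proposal is correct and follows essentially the same route as the paper: the same bottom-up induction establishing $F_v = M_v(W)$ on proper subtrees, the same use of Lemma \ref{lemma:v in M(W)} for the direction $v\in F \Rightarrow v\in M(W)$, and the same appeal to Lemma \ref{lemma:moving g doesn't change fail} to handle the root (your contradiction via ``$F_r$ is not a failure set'' is just a rephrasing of the paper's identification of $r$ as the only possible witness in $M(W)\setminus M_r(W)$). The only substantive difference is presentational: you make explicit the stability/monotonicity point that $M(W)\cap V_c$ is already final when the test at $v$ runs, which the paper leaves implicit in its definition of $M_v(W)$.
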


\begin{proof}
We will prove with induction. For the base case, let $v \in V$ be leaf. 

(\ref{lemma.item:MW in Fr} $\subseteq$) Let $v \in M(W)$. If we falsely assume $v \ne r$ then the condition in Algorithm \ref{algo:Check Failure} Line \ref{algo.line:check fail filter} is trivially true and Line \ref{algo.line:check move up} insures that $v$ is not in $M(W)$, a contradiction.  We assume $v = r$. Then $v \in F \cup \{r\}$.

(\ref{lemma.item:surjective} $\supseteq$) Let $v \in F$.  If $\ell = 0$, then $k=1$, and since $W$ is a failure set with $N_0(W)=W$, we have $v \in W$. Note that Algorithm \ref{algo:Check Failure} does not move $v$ up, giving $v \in M(W)$. If $\ell \ge 1$, there are no $\ell$-generation decedents of $v$. We arrive at $v = r$ and $k = 1$.  Since $W$ is a failure set, $W = \{v\}$. Under these circumstances, Algorithm \ref{algo:Check Failure} did not move $v$'s fail up and $M=id \Rightarrow M(W) = W = F$.

For the inductive hypothesis, we assume that for any non-leaf vertex $v \in V$, if algorithms \ref{algo:setSelectons} and \ref{algo:Check Failure} have run on a descendent $u$, then if $u \in M(W) \iff u \in F$, or equivalently, $F_v = M_v(W)$.

(\ref{lemma.item:MW in Fr} $\subseteq$) Let $v \in M(W)$ after Algorithm \ref{algo:Check Failure} has run on $v$.  This means $v$'s fail was not moved up and there is a $w\in W$ such that $M(w) = v$. The condition in Algorithm \ref{algo:Check Failure} Line \ref{algo.line:check fail filter} was false. If $v \ne r$, there exists $u$ that is an $\ell$-generation descendent of $v$ with $\comporder(M_v(W), u) \ge k$. Since $\comporder(F_v, u) = \comporder(M_v(W), u)$, we may conclude that when Algorithm \ref{algo:setSelectons} Line \ref{algo.line:fail filter} arrived at $v$, the condition was true and $v$ was added to $F$ achieving the desired result. If $v = r$ then the desired result is immediate.

(\ref{lemma.item:surjective} $\supseteq$) Let $v \in F$. If $v \ne r$, then Algorithm \ref{algo:setSelectons} found a $u$ that is an $\ell$-generation descendent of $v$ with $\comporder(F_v, u) \ge k$. By the inductive assumption, $\comporder(M_v(W), u) \ge k$.  By Lemma \ref{lemma:v in M(W)}, $v\in M(W)$. If $v = r$ then Algorithm \ref{algo:setSelectons} failed $r$ because there exists a $u \in N_\ell[r]$ such that $\comporder(F_r, u) \ge k$.  Using the inductive assumption together with Lemma \ref{lemma:moving g doesn't change fail}, we conclude that there exists a $g \in M(W) \setminus M_r(W)$ that lowers the component order of $u$.  The only element that could be is $r$. We conclude $v = r \in M(W)$.
\end{proof}

We arrive at our final results.

\begin{theorem}
\label{theorem:grand finaly}
The set $F$ of all vertices marked as failed after Algorithm \ref{algo:setSelectons} has been called on $r$ is a minimum failure set for $T$.
\end{theorem}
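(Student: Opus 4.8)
The plan is to combine Proposition \ref{prop:is failure set} with Lemma \ref{lemma:surjective} to obtain both halves of the claim: that $F$ is a failure set, and that no failure set of $T$ is smaller. Proposition \ref{prop:is failure set} already supplies the first half, so essentially all of the remaining work is the minimality bound, namely that $|F| \le |W|$ for every failure set $W$ of $T$.

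First I would fix an arbitrary failure set $W \subseteq V$ and run Algorithm \ref{algo:Check Failure} on $r$ with this $W$ and with $M$ initialized to the identity, producing the final mapping $M \mid W \to V$. Since $M$ is a function with domain $W$, its image satisfies $|M(W)| \le |W|$. By Lemma \ref{lemma:surjective}, $F \subseteq M(W) \subseteq F \cup \{r\}$; in particular $F \subseteq M(W)$, and therefore $|F| \le |M(W)| \le |W|$. Combining this with Proposition \ref{prop:is failure set} — which says $F$ is itself a failure set — shows that $F$ is a failure set of minimum cardinality, which is exactly the assertion of the theorem.

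I expect the main obstacle to have already been discharged in Lemma \ref{lemma:surjective} and the lemmas feeding it (Lemmas \ref{lemma:moving g doesn't change fail} and \ref{lemma:v in M(W)}), which together establish that the auxiliary Algorithm \ref{algo:Check Failure} rewrites an arbitrary failure set into one whose support is sandwiched between $F$ and $F \cup \{r\}$ without ever enlarging cardinality or breaking the failure property. Within the proof of the theorem itself, the only points needing a moment of care are: (i) the stray element $r$, since Lemma \ref{lemma:surjective} only gives $M(W) \subseteq F \cup \{r\}$ rather than $M(W) = F$, so one must observe that the one-sided inclusion $F \subseteq M(W)$ is all the counting argument uses and that the possible presence of $r$ in $M(W) \setminus F$ is harmless; and (ii) the legitimacy of the inequality $|M(W)| \le |W|$, which rests on the fact that each reset on Line \ref{algo.line:check move up} only reassigns existing values of $M$ and never adds elements to its domain, so $M$ remains a well-defined function on $W$ throughout and the algorithm terminates. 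With those observations in place the theorem follows in a few lines.
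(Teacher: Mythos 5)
Your proposal is correct and follows essentially the same route as the paper: invoke Proposition \ref{prop:is failure set} for the failure property and Lemma \ref{lemma:surjective} for the cardinality bound $\lvert F \rvert \le \lvert W \rvert$. Your extra care about the possible stray element $r$ in $M(W)\setminus F$ and about $\lvert M(W)\rvert \le \lvert W\rvert$ only makes explicit what the paper's appeal to surjectivity leaves implicit.
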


\begin{proof}
Lemma \ref{prop:is failure set} gives $F$ is a failure set. Lemma \ref{lemma:surjective}.\ref{lemma.item:surjective} tells us that for any failure set $W$, there exists a surjective mapping from $W$ to $F$, $\vert F \rvert \le \vert W \rvert$.

\end{proof}

\begin{theorem}
The computational complexity of Algorithm \ref{algo:setSelectons} run on $r$ is $O(n^2)$.  
\end{theorem}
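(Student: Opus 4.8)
The plan is to analyze Algorithm~\ref{algo:setSelectons} as a single post-order traversal of the tree and bound the work done at each recursion frame. Each vertex $v$ is visited exactly once by the recursive calls on Line~\ref{algo.line:recursive}; the remaining cost per frame is concentrated in two places: (i) setting component orders for $T_v$ on Line~\ref{algo.line:set comp order}, and (ii) the search over $\ell$-generation descendants of $v$ on Lines~\ref{algo.line:root plan} and~\ref{algo.line:fail filter}. First I would observe that the naive reading of Line~\ref{algo.line:set comp order}, recomputing $\comporder$ for the whole subtree $T_v$, is too expensive; instead one maintains the component orders incrementally, so that when control returns to $v$ the children's component orders are already available and $\comporder(v)$ (and any updates caused by adding $v$ to $F$) costs $O(|C_v| + \text{size of the affected $\ell$-neighborhood})$. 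The key accounting step is then to charge each such update to the vertex whose component order changed, and to note that a vertex's component order is set to $0$ at most once (by Lemma~\ref{lemma:proof of dov's algo with induction}, once a failing ancestor within distance $\ell$ appears, it is never undone), so the total update work is $O(n)$ overall, or at worst $O(n\ell) = O(n^2)$ if one is less careful.

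Next I would bound the descendant search. For a fixed $v$, enumerating all $\ell$-generation descendants and checking their component orders costs at most $O(|V_v|) = O(n)$; summed over all $n$ vertices this is $O(n^2)$, which already gives the claimed bound. To make this rigorous I would argue that listing the $\ell$-th generation below $v$ can be done by a bounded-depth search from $v$, and that $\sum_{v \in V} (\text{number of $\le \ell$-generation descendants of } v) \le \sum_{v} |V_v| \le n^2$, since each ordered pair $(v,u)$ with $u \in V_v$ is counted once. Combining the two parts, each of the $n$ frames costs $O(n)$, for a total of $O(n^2)$.

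The main obstacle I anticipate is pinning down exactly what ``set component orders for $T_v$'' means operationally, since the stated algorithm is a little loose here: a literal rereading of the whole subtree at every frame would give $O(n^2)$ at each frame and $O(n^3)$ overall, so I must commit to the incremental interpretation (carry the children's values up, and when $v$ is failed, zero out only $N_\ell[v]$ restricted to the current subtree) and justify that this is what the implementation in~\cite{Neimand_Code_Written_An_2022} does. A secondary subtlety is the $\ell$-neighborhood zeroing on failure: in principle $N_\ell[v]$ can be large, but since each vertex is zeroed at most once across the whole run, the amortized total is $O(n)$; I would present this as the cleanest argument and note that even the crude per-frame bound of $O(n)$ suffices for the $O(n^2)$ claim, so the theorem does not actually depend on the sharper amortized analysis. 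Finally I would remark that the bound is not claimed to be tight, leaving open whether a near-linear implementation is possible.
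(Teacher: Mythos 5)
Your proposal is correct and follows essentially the same route as the paper: the paper also charges each of the $n$ recursion frames $O(n)$ work — treating Line~\ref{algo.line:set comp order} as a single pass over the at most $n$ vertices of $T_v$ and bounding the descendant checks on Lines~\ref{algo.line:root plan}--\ref{algo.linen:fail f and mark} by $O(n)$ — for a total of $O(n \cdot n) = O(n^2)$. Your additional amortized/incremental maintenance discussion is a refinement the paper does not need or use, since under the natural single bottom-up pass reading of Line~\ref{algo.line:set comp order} the per-frame cost is already $O(n)$, not $O(n^2)$.
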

\begin{proof}
All $n$ descendants of $r$ are called by Line \ref{algo.line:recursive}. The complexity is $O(n \cdot k)$ where $k$ is the complexity of each vertex's operations. It remains to compute $k$.

The complexity of Line \ref{algo.line:set comp order} is at most $O(n)$ since each vertex has at most $n$ descendants over which the recursive component order computation iterates. 

Lines \ref{algo.line:root plan}, \ref{algo.line:fail filter}, and \ref{algo.linen:fail f and mark}, along with the \textbf{then} statement in \ref{algo.line:root plan} are similarly $O(n)$.  This gives the complexity of Algorithm \ref{algo:setSelectons} as $O(n\cdot (n+ n + n + n + n)) = O(n^2)$.  
\end{proof}

\subsubsection*{Acknowledgments}

Thanks to the Stevens/Seton Hall Graph Theory Group for their continued support.

\bibliographystyle{plainurl} 

\bibliography{bibliography}

\end{document}